\newtheorem{theorem}{Theorem}
\newtheorem{definition}[theorem]{Definition}
\newtheorem{example}[theorem]{Example}
\newtheorem{proposition}[theorem]{Proposition}
\newtheorem{remark}[theorem]{Remark}
\newenvironment{proof}[1][Proof]{\noindent\textbf{#1.} }{\ \rule{0.5em}{0.5em}}
\begin{document}

\title{Construction of a transmutation for the one-dimensional Schr\"{o}dinger
operator and a representation for solutions}
\author{Vladislav V. Kravchenko\\{\small Departamento de Matem\'{a}ticas, CINVESTAV del IPN, Unidad
Quer\'{e}taro, }\\{\small Libramiento Norponiente No. 2000, Fracc. Real de Juriquilla,
Quer\'{e}taro, Qro. C.P. 76230 MEXICO}\\{\small e-mail: vkravchenko@math.cinvestav.edu.mx \thanks{Research was
supported by CONACYT, Mexico via the project 166141.}}}
\maketitle

\begin{abstract}
A new representation for solutions of the one-dimensional Schr\"{o}dinger
equation $-u^{\prime\prime}+q(x)u=\omega^{2}u$ is obtained in the form of a
series possessing the following attractive feature. The truncation error is
$\omega$-independent for all $\omega\in\mathbb{R}$. For the coefficients of
the series simple recurrent integration formulas are obtained which make the
new representation applicable for computation.

\end{abstract}

\section{Introduction}

In the present work a new representation for solutions of the one-dimensional
Schr\"{o}dinger equation%
\begin{equation}
-u^{\prime\prime}+q(x)u=\omega^{2}u \label{Intro Schr}%
\end{equation}
is obtained in the form of a series possessing the following attractive
feature. If $u_{N}(\omega,x)$ denotes the truncated series and $u(\omega,x)$
the exact solution, the following inequality is valid for any $\omega
\in\mathbb{R}$, $\left\vert u(\omega,x)-u_{N}(\omega,x)\right\vert
\leq\varepsilon_{N}\left(  x\right)  $, where $\varepsilon_{N}$ is a
nonnegative function independent of $\omega$ and $\varepsilon_{N}\left(
x\right)  \rightarrow0$ when $N\rightarrow\infty$. For the coefficients of the
series simple recurrent integration formulas are obtained which make the new
representation applicable for computation.

In the recent work \cite{KNT} a representation for solutions of
(\ref{Intro Schr}) possessing the described above feature was proposed in a
completely different form. Both representations are united by the fact that
they are obtained with the use of a transmutation (transformation) operator.
In \cite{KNT} the solution $u$ of (\ref{Intro Schr}) satisfying the initial
conditions
\[
u(0)=1,\qquad u^{\prime}(0)=-i\omega
\]
was considered in the form%
\[
u(\omega,x)=e^{-i\omega x}+\int_{-x}^{x}K(x,y)e^{-i\omega y}dy,
\]
well known from \cite{Marchenko}, \cite{Marchenko52} and numerous other
publications. The kernel $K$ was found in \cite{KNT} in the form of a
Fourier-Legendre series which led to a representation of $u(\omega,x)$ in the
form of a Neumann series of Bessel functions (NSBF). In the present work we
explore another possibility of representing $u(\omega,x)$ as a result of
action of a transmutation operator. Namely, an elementary reasoning (see the
next section) based on the well known facts from the scattering theory leads
to another representation of $u(\omega,x)$ in the form
\begin{equation}
u(\omega,x)=e^{-i\omega x}+\int_{-\infty}^{x}\mathbf{A}(x,y)e^{-i\omega y}dy
\label{Intro u}%
\end{equation}
where the kernel $\mathbf{A}(x,\cdot)\in L_{2}\left(  -\infty,x\right)  $ is
that arising in the scattering theory associated with (\ref{Intro Schr}), see,
e.g., \cite{Faddeev}. Since the integral in (\ref{Intro u}) is taken over a
semi-infinite interval it is natural to look for $\mathbf{A}(x,\cdot)$ in the
form of a Fourier-Laguerre series. This is done in the present work, and as a
corollary a new representation for solutions of (\ref{Intro Schr}) is
obtained. For $\omega$-independent coefficients of the representation a direct
formula is derived in terms of so-called formal powers arising in spectral
parameter power series (SPPS) method \cite{KKRosu}, \cite{KrPorter2010},
\cite{KT Birkhauser 2013}. Moreover, a much more convenient for computing
recurrent integration formula is obtained as well, which in practice allows
one to compute thousands of the coefficients. We illustrate several features
of the new representation numerically on a simple test problem. All the
computations reported here took not more than several seconds performed in
Matlab 2012, including those which involved computation of up to $10^{5}$
coefficients. Although this paper is not about a new numerical algorithm, and
we do not discuss details of its implementation, our numerical results show
that the new representation can be of interest for practical computation.

Finally, the new representation can be extended onto a general Sturm-Liouville
equation as well as onto the perturbed Bessel equation (see \cite{KT PQR} and
\cite{KTC} where it was done for the NSBF representation).

Besides this Introduction the paper contains four sections. In Section 2 we
obtain the Fourier-Laguerre expansion of the kernel $\mathbf{A}$. In Section 3
we prove the main result of this work, the new representation for solutions of
(\ref{Intro Schr}). In Section 4 we derive a recurrent integration procedure
for computing the coefficients in the representation. Section 5 contains some
numerical illustrations.

\section{Construction of a transmutation}

Consider the equation
\begin{equation}
-u^{\prime\prime}+q(x)u=\omega^{2}u \label{Schr}%
\end{equation}
on a finite interval $(0,d)$. We suppose that $q$ is a real valued, measurable
function. The solution of (\ref{Schr}) satisfying the initial conditions%
\begin{equation}
u(0)=1,\qquad u^{\prime}(0)=-i\omega, \label{init u}%
\end{equation}
will be denoted as $u(\omega,x)$. We consider $\omega\in\mathbb{R}$.

One can extend $q$ by zero onto the whole line, and $u(\omega,x)$ by
$e^{-i\omega x}$ onto the half-line $(-\infty,0)$. Then $u(\omega,x)$ can be
regarded as a Jost solution of (\ref{Schr}) satisfying the asymptotic relation
(which is in fact an equality in our case) $u(\omega,x)\sim e^{-i\omega x}$
when $x\rightarrow-\infty$. Hence, it is known (see, e.g., \cite{Faddeev})
that there exists such a function $\mathbf{A}(x,\cdot)\in L_{2}\left(
-\infty,x\right)  $ that
\begin{equation}
u(\omega,x)=e^{-i\omega x}+\int_{-\infty}^{x}\mathbf{A}(x,y)e^{-i\omega
y}dy\qquad\text{for all }\omega\text{.} \label{u=Ae}%
\end{equation}
This integral representation of $u(\omega,x)$ can be viewed as action of an
operator of transmutation (transformation) with the kernel $\mathbf{A}(x,y)$
on the solution $e^{-i\omega x}$ of the elementary equation $-u^{\prime\prime
}=\omega^{2}u$. Denote this operator by $A\left[  v\right]  (x):=v(x)+\int
_{-\infty}^{x}\mathbf{A}(x,y)v(y)dy$.

Note that (see, e.g., \cite{Faddeev})
\begin{equation}
\mathbf{A}(x,x)=\frac{1}{2}\int_{0}^{x}q(y)dy. \label{A(x,x)}%
\end{equation}

By a change of the integration variable equality (\ref{u=Ae}) can be written
as follows
\[
u(\omega,x)=e^{-i\omega x}\left(  1+\int_{0}^{\infty}\mathbf{A}%
(x,x-t)e^{i\omega t}dt\right)  \text{.}%
\]
Let us represent the kernel $\mathbf{A}(x,x-t)$ in the form $\mathbf{A}%
(x,x-t)=\mathbf{a}(x,t)e^{-t}$. The function $\mathbf{a}(x,\cdot)$ then
belongs to the space $L_{2}\left(  0,\infty;e^{-t}\right)  $ equipped with the
scalar product $\left\langle u,v\right\rangle :=\int_{0}^{\infty}%
u(t)\overline{v}(t)e^{-t}dt$ and the norm $\left\Vert u\right\Vert
:=\sqrt{\left\langle u,u\right\rangle }$. Thus, for any $x\in\left[
0,d\right]  $ the function $\mathbf{a}(x,\cdot)$ admits a Fourier-Laguerre
expansion convergent in this norm,
\[
\mathbf{a}(x,t)=\sum_{n=0}^{\infty}a_{n}(x)L_{n}(t),
\]
where $L_{n}$ stands for the Laguerre polynomial of order $n,$ and hence%
\begin{equation}
\mathbf{A}(x,y)=\sum_{n=0}^{\infty}a_{n}(x)L_{n}(x-y)e^{-(x-y)}. \label{A}%
\end{equation}
We note that due to (\ref{A(x,x)}) and the fact that $L_{n}(0)=1$ we have the
equality
\begin{equation}
\sum_{n=0}^{\infty}a_{n}(x)=\frac{1}{2}\int_{0}^{x}q(y)dy. \label{Sum a_n}%
\end{equation}
In order to find formulas for the coefficients $a_{n}$ we introduce first the
following notations.

Throughout the paper we suppose that $f_{0}$ is a solution of the equation
\begin{equation}
f^{\prime\prime}-q(x)f=0 \label{SchrHom}%
\end{equation}
satisfying the initial conditions
\[
f_{0}(0)=1,\quad f_{0}^{\prime}(0)=0.
\]

Consider two sequences of recursive integrals (see \cite{KrCV08},
\cite{KrPorter2010})
\begin{equation}
X^{(0)}(x)\equiv1,\qquad X^{(n)}(x)=n\int_{0}^{x}X^{(n-1)}(s)\left(  f_{0}%
^{2}(s)\right)  ^{(-1)^{n}}\,\mathrm{d}s,\qquad n=1,2,\ldots\label{Xn}%
\end{equation}
and
\begin{equation}
\widetilde{X}^{(0)}\equiv1,\qquad\widetilde{X}^{(n)}(x)=n\int_{0}%
^{x}\widetilde{X}^{(n-1)}(s)\left(  f_{0}^{2}(s)\right)  ^{(-1)^{n-1}%
}\,\mathrm{d}s,\qquad n=1,2,\ldots. \label{Xtilde}%
\end{equation}

\begin{definition}
\label{Def Formal powers phik and psik}The family of functions $\left\{
\varphi_{k}\right\}  _{k=0}^{\infty}$ constructed according to the rule
\begin{equation}
\varphi_{k}(x)=%
\begin{cases}
f_{0}(x)X^{(k)}(x), & k\text{\ odd},\\
f_{0}(x)\widetilde{X}^{(k)}(x), & k\text{\ even}%
\end{cases}
\label{phik}%
\end{equation}
is called the system of formal powers associated with $f_{0}$.
\end{definition}

\begin{remark}
\label{Rem On particular solutions}If $f_{0}$ has zeros some of the recurrent
integrals (\ref{Xn}) or (\ref{Xtilde}) may not exist, although even in that
case the formal powers (\ref{phik}) are well defined. It is convenient to
construct them in the following way. Take a nonvanishing solution $f$ of
(\ref{Schr}) such that $f(0)=1$. For example, $f=f_{0}+if_{1}$ where $f_{1}$
is a solution of (\ref{Schr}) satisfying $f_{1}(0)=0$, $f_{1}^{\prime}(0)=1$.
Since $q$ is real valued such $f$ does not vanish. Then (see \cite[Proposition
4.7]{KT Birkhauser 2013})
\[
\varphi_{k}=%
\begin{cases}
\Phi_{k}, & k\text{\ odd,}\\
\Phi_{k}-\frac{f^{\prime}(0)}{k+1}\Phi_{k+1}, & k\text{\ even,}%
\end{cases}
\]
where $\Phi_{k}$ are formal powers associated with $f$.
\end{remark}

\begin{remark}
The formal powers arise in the spectral parameter power series (SPPS)
representation for solutions of (\ref{Schr}) (see \cite{KKRosu},
\cite{KrCV08}, \cite{KMoT}, \cite{KrPorter2010}). In particular, the solution
$u(\omega,x)$ has the form
\begin{equation}
u(\omega,x)=\sum_{n=0}^{\infty}\frac{\left(  -i\omega\right)  ^{n}\varphi
_{n}(x)}{n!}.\label{u SPPS}%
\end{equation}
The series converges uniformly both with respect to $x$ on $\left[
0,d\right]  $ and with respect to $\omega$ on any compact subset of the
complex plane.
\end{remark}

$\bigskip$

\begin{proposition}%
\begin{equation}
A\left[  x^{k}\right]  =\varphi_{k}. \label{Axk}%
\end{equation}

\end{proposition}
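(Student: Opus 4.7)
The plan is to exploit that $u(\omega,x)$ has two different expansions in $\omega$. On the one hand, the SPPS representation (\ref{u SPPS}) tells us that $\omega\mapsto u(\omega,x)$ is entire with Taylor coefficient at $\omega=0$ of order $n$ equal to $(-i)^{n}\varphi_{n}(x)/n!$. On the other hand, the transmutation representation (\ref{u=Ae}) expresses the same function as $e^{-i\omega x}+\int_{-\infty}^{x}\mathbf{A}(x,y)e^{-i\omega y}\,dy$. I will compute the Taylor coefficients at $\omega=0$ of this second expression and match them term by term with those of the SPPS series.

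Formally, differentiating (\ref{u=Ae}) $n$ times in $\omega$ and setting $\omega=0$ produces
\[
\left.\frac{\partial^{n}u}{\partial\omega^{n}}\right|_{\omega=0}=(-ix)^{n}+\int_{-\infty}^{x}\mathbf{A}(x,y)(-iy)^{n}\,dy=(-i)^{n}A[x^{n}](x).
\]
Dividing by $n!$ and equating with the $n$-th SPPS coefficient yields $A[x^{n}](x)=\varphi_{n}(x)$, which is (\ref{Axk}).

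The only point requiring genuine care is the differentiation under the integral over the semi-infinite interval $(-\infty,x)$; it suffices to show that $|y|^{n}|\mathbf{A}(x,y)|$ is integrable there. Using $\mathbf{A}(x,y)=\mathbf{a}(x,x-y)e^{-(x-y)}$ with $\mathbf{a}(x,\cdot)\in L_{2}(0,\infty;e^{-t})$, the Cauchy--Schwarz inequality gives
\[
\int_{-\infty}^{x}|y|^{n}|\mathbf{A}(x,y)|\,dy=\int_{0}^{\infty}|x-t|^{n}|\mathbf{a}(x,t)|e^{-t}\,dt\leq\left\Vert \mathbf{a}(x,\cdot)\right\Vert \left(\int_{0}^{\infty}(x-t)^{2n}e^{-t}\,dt\right)^{1/2},
\]
which is finite. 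The same bound, applied uniformly on a real neighbourhood of $\omega=0$, authorizes (by dominated convergence) pulling every $\omega$-derivative inside the integral and, in particular, guarantees that the integral depends analytically on $\omega$ near $0$. Once this technical step is in place, uniqueness of Taylor coefficients closes the argument; no further computation is needed.
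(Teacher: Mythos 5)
Your argument is correct and is essentially the paper's own proof: both match the $\omega$-power-series coefficients of the SPPS representation (\ref{u SPPS}) against those of the transmutation representation (\ref{u=Ae}), your $n$-fold differentiation at $\omega=0$ being exactly the paper's ``equating expressions at corresponding powers of $\omega$.'' The only addition is your Cauchy--Bunyakovsky--Schwarz bound justifying differentiation under the integral, a technical point the paper leaves implicit (and note that for the conclusion you only need smoothness in real $\omega$ together with equality on $\mathbb{R}$, so the aside about analyticity near $0$ is not required).
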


\begin{proof}
From (\ref{u=Ae}) and (\ref{u SPPS}) the following equality follows%
\[
\sum_{n=0}^{\infty}\frac{\left(  -i\omega\right)  ^{n}\varphi_{n}(x)}{n!}%
=\sum_{n=0}^{\infty}\frac{\left(  -i\omega\right)  ^{n}x^{n}}{n!}+\sum
_{n=0}^{\infty}\frac{\left(  -i\omega\right)  ^{n}}{n!}\int_{-\infty}%
^{x}\mathbf{A}(x,y)y^{n}dy\qquad\text{for all }\omega
\]
from which (\ref{Axk}) is obtained by equating expressions at corresponding
powers of $\omega$.
\end{proof}

\begin{proposition}
The coefficients $a_{n}$ in (\ref{A}) have the form%
\begin{equation}
a_{n}(x)=\sum_{j=0}^{n}\left(  -1\right)  ^{j}\left(  \varphi_{j}%
(x)-x^{j}\right)  \sum_{k=j}^{n}\left(  -1\right)  ^{k}\frac{n!}%
{(n-k)!k!(k-j)!j!}x^{k-j}. \label{a_n}%
\end{equation}

\end{proposition}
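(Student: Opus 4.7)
The plan is to combine Proposition (\ref{Axk}) with the Laguerre-orthogonality formula for Fourier--Laguerre coefficients, using the substitution $t=x-y$ to bridge the two.

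First I would rewrite the identity $A[x^{k}]=\varphi_{k}$ in the variable $t=x-y$. Since $\mathbf{A}(x,x-t)=\mathbf{a}(x,t)e^{-t}$, this gives
\[
\varphi_{k}(x)-x^{k}=\int_{-\infty}^{x}\mathbf{A}(x,y)y^{k}\,dy=\int_{0}^{\infty}\mathbf{a}(x,t)(x-t)^{k}e^{-t}\,dt.
\]
Next, because the Laguerre polynomials $\{L_{n}\}$ are orthonormal in $L_{2}(0,\infty;e^{-t})$, the Fourier--Laguerre coefficients of $\mathbf{a}(x,\cdot)$ are given by
\[
a_{n}(x)=\int_{0}^{\infty}\mathbf{a}(x,t)L_{n}(t)e^{-t}\,dt.
\]

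The remaining step is a purely algebraic identity: express $L_{n}(t)$ as an $x$-dependent linear combination of $(x-t)^{j}$, $j=0,\dots,n$. Using the standard representation $L_{n}(t)=\sum_{k=0}^{n}(-1)^{k}\binom{n}{k}\frac{t^{k}}{k!}$ and the binomial expansion $t^{k}=\bigl(x-(x-t)\bigr)^{k}=\sum_{j=0}^{k}(-1)^{j}\binom{k}{j}x^{k-j}(x-t)^{j}$, I would swap the order of summation and simplify $\binom{n}{k}\binom{k}{j}/k!=n!/\bigl((n-k)!\,j!\,(k-j)!\,k!\bigr)$ to obtain
\[
L_{n}(t)=\sum_{j=0}^{n}(-1)^{j}(x-t)^{j}\sum_{k=j}^{n}(-1)^{k}\frac{n!}{(n-k)!\,k!\,(k-j)!\,j!}x^{k-j}.
\]

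Substituting this expansion into the formula for $a_{n}(x)$ and interchanging the finite sum with the integral, each integral $\int_{0}^{\infty}\mathbf{a}(x,t)(x-t)^{j}e^{-t}\,dt$ collapses to $\varphi_{j}(x)-x^{j}$ by the first step, which delivers (\ref{a_n}) exactly. There is no real obstacle here; the only thing to be careful about is the bookkeeping of signs and binomial coefficients when collecting powers of $(x-t)$, and the tacit invocation of Proposition (\ref{Axk}), which does the transmutation work behind the scenes.
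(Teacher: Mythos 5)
Your proposal is correct and follows essentially the same route as the paper: both extract $a_{n}(x)$ as the Fourier--Laguerre coefficient $\int_{0}^{\infty}\mathbf{a}(x,t)L_{n}(t)e^{-t}\,dt$, expand the Laguerre polynomial via its standard coefficients plus a binomial expansion (your $t=x-y$ substitution just rephrases the paper's computation with $L_{n}(x-y)$ against $\mathbf{A}(x,y)$), and then invoke $A\bigl[x^{j}\bigr]=\varphi_{j}$, i.e.\ $\int_{-\infty}^{x}\mathbf{A}(x,y)y^{j}\,dy=\varphi_{j}(x)-x^{j}$, before swapping the order of summation. The bookkeeping of $\binom{n}{k}\binom{k}{j}/k!$ matches the stated formula, so nothing is missing.
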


\begin{proof}
Denote by $l_{k,n}:=\frac{\left(  -1\right)  ^{k}}{k!}\binom{n}{k}$ the
coefficient at $x^{k}$ of the Laguerre polynomial $L_{n}(x)$.

Consider the integral
\begin{align*}
\int_{-\infty}^{x}\mathbf{A}(x,y)L_{n}(x-y)dy  &  =\sum_{n=0}^{\infty}%
a_{m}(x)\int_{-\infty}^{x}L_{m}(x-y)L_{n}(x-y)e^{-(x-y)}dy\\
&  =\sum_{n=0}^{\infty}a_{m}(x)\int_{0}^{\infty}L_{m}(t)L_{n}(t)e^{-(t)}%
dy=a_{n}(x).
\end{align*}
Hence
\begin{align*}
a_{n}(x)  &  =\int_{-\infty}^{x}\mathbf{A}(x,y)L_{n}(x-y)dy=\sum_{k=0}%
^{n}l_{k,n}\int_{-\infty}^{x}\mathbf{A}(x,y)\left(  x-y\right)  ^{k}dy\\
&  =\sum_{k=0}^{n}l_{k,n}\sum_{j=0}^{k}\left(  -1\right)  ^{j}\binom{k}%
{j}x^{k-j}\int_{-\infty}^{x}\mathbf{A}(x,y)y^{j}dy.
\end{align*}
From (\ref{Axk}) we have that $\int_{-\infty}^{x}\mathbf{A}(x,y)y^{j}%
dy=\varphi_{j}(x)-x^{j}$, and hence%
\[
a_{n}(x)=\sum_{k=0}^{n}l_{k,n}\sum_{j=0}^{k}\left(  -1\right)  ^{j}\binom
{k}{j}x^{k-j}\left(  \varphi_{j}(x)-x^{j}\right)
\]
from where (\ref{a_n}) is obtained by changing the order of summation.
\end{proof}

\begin{remark}
Although formula (\ref{a_n}) offers an explicit expression for the
coefficients, it is not the best alternative for practical computation.
Another representation for $a_{n}$ suited better for computational purposes is
derived in Section 4.
\end{remark}

\section{A representation for solutions of (\ref{Schr})}

\begin{theorem}
\label{Th representation of u}The solution $u(\omega,x)$ of (\ref{Schr})
satisfying the initial conditions (\ref{init u}) has the form%
\begin{equation}
u(\omega,x)=e^{-i\omega x}\left(  1+\sum_{n=0}^{\infty}\left(  -1\right)
^{n}a_{n}(x)\frac{\left(  i\omega\right)  ^{n}}{\left(  1-i\omega\right)
^{n+1}}\right)  . \label{u}%
\end{equation}

The following estimate is valid for any $\omega\in\mathbb{R}$,%
\begin{equation}
\left\vert u(\omega,x)-u_{N}(\omega,x)\right\vert \leq\varepsilon_{N}\left(
x\right)  , \label{estimate omega}%
\end{equation}
where
\[
u_{N}(\omega,x):=e^{-i\omega x}\left(  1+\sum_{n=0}^{N}\left(  -1\right)
^{n}a_{n}(x)\frac{\left(  i\omega\right)  ^{n}}{\left(  1-i\omega\right)
^{n+1}}\right)  ,
\]
and $\varepsilon_{N}\left(  x\right)  $ is a nonnegative function independent
of $\omega$ and such that $\varepsilon_{N}\left(  x\right)  \rightarrow0$ for
all $x\in\left[  0,d\right]  $ when $N\rightarrow\infty$.
\end{theorem}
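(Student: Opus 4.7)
My plan is to substitute the Fourier--Laguerre expansion (\ref{A}) for $\mathbf{A}$ into the transmutation formula (\ref{u=Ae}), identify each term via the Laplace transform of a Laguerre polynomial, and then derive the uniform-in-$\omega$ remainder estimate by a single Cauchy--Schwarz inequality in the weighted space $L_{2}(0,\infty;e^{-t})$ together with Parseval's identity. The key structural observation that makes the whole scheme work is that the function $t\mapsto e^{i\omega t}$ has norm exactly $1$ in $L_{2}(0,\infty;e^{-t})$ for every real $\omega$.

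First I would change variables $t=x-y$ in (\ref{u=Ae}) and use the factorization $\mathbf{A}(x,x-t)=\mathbf{a}(x,t)e^{-t}$ to rewrite the solution as
\[
u(\omega,x)=e^{-i\omega x}\Bigl(1+\int_{0}^{\infty}\mathbf{a}(x,t)e^{-t}e^{i\omega t}\,dt\Bigr).
\]
Denote the partial sum $\mathbf{a}_{N}(x,t):=\sum_{n=0}^{N}a_{n}(x)L_{n}(t)$. Substituting $\mathbf{a}_{N}$ in place of $\mathbf{a}$ and applying the Laplace-transform identity
\[
\int_{0}^{\infty}L_{n}(t)e^{-(1-i\omega)t}\,dt=\frac{(-i\omega)^{n}}{(1-i\omega)^{n+1}}=(-1)^{n}\frac{(i\omega)^{n}}{(1-i\omega)^{n+1}},\qquad\omega\in\mathbb{R},
\]
reproduces exactly the finite sum defining $u_{N}(\omega,x)$.

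Second, I would estimate the remainder as a single integral:
\[
u(\omega,x)-u_{N}(\omega,x)=e^{-i\omega x}\int_{0}^{\infty}\bigl(\mathbf{a}(x,t)-\mathbf{a}_{N}(x,t)\bigr)e^{-t}e^{i\omega t}\,dt.
\]
Taking absolute values and using Cauchy--Schwarz with respect to the measure $e^{-t}dt$, together with $\int_{0}^{\infty}|e^{i\omega t}|^{2}e^{-t}\,dt=1$, gives
\[
|u(\omega,x)-u_{N}(\omega,x)|\leq\|\mathbf{a}(x,\cdot)-\mathbf{a}_{N}(x,\cdot)\|=\Bigl(\sum_{n=N+1}^{\infty}|a_{n}(x)|^{2}\Bigr)^{1/2},
\]
the last equality by Parseval's identity for the Laguerre orthonormal basis. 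Setting $\varepsilon_{N}(x)$ equal to the right-hand side produces a quantity that is patently independent of $\omega$ and that tends to $0$ as $N\to\infty$ because $\mathbf{a}(x,\cdot)\in L_{2}(0,\infty;e^{-t})$. Passing $N\to\infty$ pointwise then establishes the representation (\ref{u}) simultaneously with the estimate (\ref{estimate omega}).

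\textbf{Expected main obstacle.} The only point needing care is justifying the interchange of the infinite sum with the Laplace-type integral against the unbounded (in modulus non-decaying) oscillation $e^{i\omega t}$; a naive dominated-convergence argument would run into trouble because $e^{i\omega t}$ is not in $L_{1}$. This is precisely what the weight $e^{-t}$ is designed to absorb: the oscillatory exponential has unit norm in the weighted Hilbert space, so Cauchy--Schwarz converts $L_{2}$-convergence of the Fourier--Laguerre partial sums directly into pointwise convergence of $u_{N}$ to $u$ with an $\omega$-free remainder, bypassing any need for absolute summability of the coefficients $a_{n}(x)$.
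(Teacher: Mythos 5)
Your proposal is correct and follows essentially the same route as the paper: substitution of the Fourier--Laguerre expansion into (\ref{u=Ae}), the Laplace-transform identity $\int_{0}^{\infty}L_{n}(t)e^{-(1-i\omega)t}dt=(-1)^{n}(i\omega)^{n}/(1-i\omega)^{n+1}$, and a Cauchy--Schwarz estimate in $L_{2}(0,\infty;e^{-t})$ using $\Vert e^{i\omega t}\Vert=1$ for real $\omega$. The only cosmetic differences are that you write $\varepsilon_{N}(x)$ explicitly via Parseval as $\bigl(\sum_{n=N+1}^{\infty}|a_{n}(x)|^{2}\bigr)^{1/2}$ and obtain (\ref{u}) as the limit of the partial sums $u_{N}$ (a slightly tidier justification of the term-by-term integration), while the paper leaves $\varepsilon_{N}$ as the weighted $L_{2}$ norm of the tail.
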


\begin{proof}
Consider (\ref{u=Ae}). Substitution of (\ref{A}) into it gives us the
equality
\begin{align*}
u(\omega,x)  &  =e^{-i\omega x}+\sum_{n=0}^{\infty}a_{n}(x)\int_{-\infty}%
^{x}L_{n}(x-y)e^{-(x-y)}e^{-i\omega y}dy\\
&  =e^{-i\omega x}\left(  1+\sum_{n=0}^{\infty}a_{n}(x)\int_{0}^{\infty}%
L_{n}(t)e^{-(1-i\omega)t}dt\right)  .
\end{align*}
For the last integral the following equality holds \cite[7.414 (2)]{GR}%
\[
\int_{0}^{\infty}L_{n}(t)e^{-(1-i\omega)t}dt=\frac{\left(  -1\right)
^{n}\left(  i\omega\right)  ^{n}}{\left(  1-i\omega\right)  ^{n+1}},
\]
from where we obtain (\ref{u}).

To prove (\ref{estimate omega}) consider the difference%
\begin{align*}
\left\vert u(\omega,x)-u_{N}(\omega,x)\right\vert  &  =\left\vert \int
_{0}^{\infty}e^{-t}\left(  \mathbf{a}\left(  x,t\right)  -\mathbf{a}%
_{N}\left(  x,t\right)  \right)  e^{-i\omega(x-t)}dt\right\vert \\
&  =\left\vert \left\langle \mathbf{a}\left(  x,t\right)  -\mathbf{a}%
_{N}\left(  x,t\right)  ,e^{-i\omega(x-t)}\right\rangle \right\vert ,
\end{align*}
where $\mathbf{a}_{N}\left(  x,t\right)  :=\sum_{n=0}^{N}a_{n}(x)L_{n}(t)$.
Application of the Cauchy--Bunyakovsky--Schwarz inequality leads to the
inequality%
\[
\left\vert u(\omega,x)-u_{N}(\omega,x)\right\vert \leq\left\Vert
\mathbf{a}\left(  x,t\right)  -\mathbf{a}_{N}\left(  x,t\right)  \right\Vert
\left\Vert e^{-i\omega(x-t)}\right\Vert
\]
where obviously $\left\Vert e^{-i\omega(x-t)}\right\Vert =1$ for $\omega
\in\mathbb{R}$. Now denoting
\[
\varepsilon_{N}\left(  x\right)  :=\left\Vert \mathbf{a}\left(  x,t\right)
-\mathbf{a}_{N}\left(  x,t\right)  \right\Vert =\left(  \int_{0}^{\infty
}e^{-t}\left\vert \mathbf{a}\left(  x,t\right)  -\mathbf{a}_{N}\left(
x,t\right)  \right\vert ^{2}dt\right)  ^{\frac{1}{2}}%
\]
we obtain (\ref{estimate omega}).
\end{proof}

\begin{remark}
\label{Rem complex omega}The series (\ref{u}) can be considered also in the
case $\omega\in\mathbb{C}$, though not for all values of $\omega$ such
attractive estimates as (\ref{estimate omega}) are possible. Indeed, first we
note that the function $e^{i\omega t}\in L_{2}\left(  0,\infty;e^{-t}\right)
$ if only $\operatorname{Im}\omega>-1/2$ and in this case $\left\Vert
e^{i\omega t}\right\Vert =1/\sqrt{1+2\operatorname{Im}\omega}$. Thus, applying
the reasoning from the proof of Theorem \ref{Th representation of u} we obtain%
\[
\left\vert u(\omega,x)-u_{N}(\omega,x)\right\vert \leq\frac{\varepsilon
_{N}\left(  x\right)  e^{\operatorname{Im}\omega x}}{\sqrt
{1+2\operatorname{Im}\omega}},\qquad\text{when }\operatorname{Im}\omega>-1/2.
\]
This estimate is independent of $\operatorname{Re}\omega$ and is especially
attractive when $-1/2<\operatorname{Im}\omega\leq0$. We discuss its
applications in Section \ref{Sect Numerical}.
\end{remark}

\section{A recurrent procedure for computing the coefficients $a_{n}$}

In order to obtain another way to compute $a_{n}$ we substitute (\ref{u}) into
equation (\ref{Schr}). Let us stress that it is done formally without
discussing the possibility of differentiation of the series (\ref{u}) with
respect to $x$, and only with the aim to obtain the recurrent formulas which
then can be easily checked directly from (\ref{a_n}).

Differentiating twice (\ref{u}) we obtain the equality%
\[
u^{\prime\prime}(\omega,x)+\omega^{2}u(\omega,x)=e^{-i\omega x}\left(
\sum_{n=0}^{\infty}a_{n}^{\prime\prime}(x)\frac{\left(  -i\omega\right)  ^{n}%
}{\left(  1-i\omega\right)  ^{n+1}}-2i\omega\sum_{n=0}^{\infty}a_{n}^{\prime
}(x)\frac{\left(  -i\omega\right)  ^{n}}{\left(  1-i\omega\right)  ^{n+1}%
}\right)  .
\]
Using (\ref{Schr}) on the left-hand side we have that%
\[
q(x)\left(  1+\sum_{n=0}^{\infty}a_{n}(x)\frac{\left(  -i\omega\right)  ^{n}%
}{\left(  1-i\omega\right)  ^{n+1}}\right)  =\sum_{n=0}^{\infty}\left(
a_{n}^{\prime\prime}(x)\frac{\left(  -i\omega\right)  ^{n}}{\left(
1-i\omega\right)  ^{n+1}}+2a_{n}^{\prime}(x)\frac{\left(  -i\omega\right)
^{n+1}}{\left(  1-i\omega\right)  ^{n+1}}\right)  .
\]
Denote $z:=\frac{i\omega}{1-i\omega}$. Then the last equality can be written
as follows%
\[
q(x)\left(  1+\frac{1+z}{z}\sum_{n=0}^{\infty}\left(  -1\right)  ^{n}%
a_{n}(x)z^{n+1}\right)  =\sum_{n=0}^{\infty}\left(  \left(  -1\right)
^{n}a_{n}^{\prime\prime}(x)\frac{1+z}{z}z^{n+1}+\left(  -1\right)
^{n+1}2a_{n}^{\prime}(x)z^{n+1}\right)
\]
or%
\[
q(x)\left(  1+\left(  1+z\right)  \sum_{n=0}^{\infty}\left(  -1\right)
^{n}a_{n}(x)z^{n}\right)  =\sum_{n=0}^{\infty}\left(  \left(  -1\right)
^{n}a_{n}^{\prime\prime}(x)\left(  1+z\right)  z^{n}+\left(  -1\right)
^{n+1}2a_{n}^{\prime}(x)z^{n+1}\right)  .
\]
Equating terms corresponding to equal powers of $z$ we obtain the equalities%
\begin{equation}
\mathbf{L}a_{n}=\mathbf{L}a_{n-1}-2a_{n-1}^{\prime} \label{La_n}%
\end{equation}
where $\mathbf{L}:=\frac{d^{2}}{dx^{2}}-q$. Moreover, from (\ref{a_n}) it
follows that
\begin{equation}
a_{0}=f_{0}-1, \label{a0}%
\end{equation}
and
\begin{equation}
a_{n}(0)=a_{n}^{\prime}(0)=0 \label{init a_n}%
\end{equation}
for all $n=0,1,\ldots$. From here a simple recurrent procedure for computing
the coefficients $a_{n}$ can be proposed which is formulated as the following statement.

\begin{proposition}
The coefficients $a_{n}$ in (\ref{A}) and (\ref{u}) can be calculated by means
of the following recurrent integration procedure%
\begin{equation}
a_{n}(x)=a_{n-1}(x)-2f(x)\int_{0}^{x}\frac{a_{n-1}(s)ds}{f(s)}+2f(x)\int
_{0}^{x}\frac{1}{f^{2}(t)}\int_{0}^{t}f^{\prime}(s)a_{n-1}(s)ds
\label{a_n recurrent}%
\end{equation}
with $f$ being any solution of (\ref{SchrHom}), nonvanishing on $\left[
0,d\right]  $, see Remark \ref{Rem On particular solutions}, and $a_{0}$
defined by (\ref{a0}).

\begin{proof}
One of the possibilities to write down a solution of the inhomogeneous
equation $\mathbf{L}a_{n}=g$ satisfying (\ref{init a_n}) is using the formula
\[
a_{n}(x)=f(x)\int_{0}^{x}\frac{1}{f^{2}(t)}\int_{0}^{t}f(s)g(s)ds.
\]
Its application to (\ref{La_n}) gives us the following equality
\[
a_{n}(x)=a_{n-1}(x)-2f(x)\int_{0}^{x}\frac{1}{f^{2}(t)}\int_{0}^{t}%
f(s)a_{n-1}^{\prime}(s)ds
\]
which after an integration by parts can be written in the form
(\ref{a_n recurrent}).
\end{proof}
\end{proposition}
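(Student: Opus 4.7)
My plan is to build the unique solution of the linear recurrence $\mathbf{L} a_n = \mathbf{L} a_{n-1} - 2 a_{n-1}'$ with the zero Cauchy data (\ref{init a_n}) by writing down an explicit Green-function-type formula based on the nonvanishing solution $f$, and then simplify the result by a single integration by parts.

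The starting point is the Polya factorization of $\mathbf{L}$ attached to $f$: a direct computation using $f'' = qf$ gives $\mathbf{L}(fv) = \frac{1}{f}(f^2 v')'$. Since $f(0) \neq 0$, the change of variables $u = fv$ turns the Cauchy problem $\mathbf{L} u = g$, $u(0) = u'(0) = 0$ into $(f^2 v')' = fg$ with $v(0) = v'(0) = 0$, which integrates to
\[
u(x) = f(x) \int_0^x \frac{1}{f^2(t)} \int_0^t f(s) g(s) \, ds \, dt.
\]
I would then apply this formula to $g = \mathbf{L} a_{n-1} - 2 a_{n-1}'$. By linearity and the uniqueness of solutions of the Cauchy problem, the contribution generated by $\mathbf{L} a_{n-1}$ is simply $a_{n-1}$ itself (both satisfy $\mathbf{L} \cdot = \mathbf{L} a_{n-1}$ with vanishing Cauchy data at $0$), so that
\[
a_n(x) = a_{n-1}(x) - 2 f(x) \int_0^x \frac{1}{f^2(t)} \int_0^t f(s) a_{n-1}'(s) \, ds \, dt.
\]

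The final step is an integration by parts in the inner integral. Exploiting $a_{n-1}(0) = 0$ to kill the boundary term, one has $\int_0^t f(s) a_{n-1}'(s) \, ds = f(t) a_{n-1}(t) - \int_0^t f'(s) a_{n-1}(s) \, ds$; the first piece, after cancelling one power of $f$ against $1/f^2(t)$, produces $\int_0^x a_{n-1}(t)/f(t) \, dt$. Substituting and distributing the factor $-2 f(x)$ yields precisely (\ref{a_n recurrent}). I do not anticipate a genuine obstacle beyond bookkeeping; the one point deserving a word is that the recurrence (\ref{La_n}) and the initial conditions (\ref{init a_n}) were obtained formally in the text by differentiating the series (\ref{u}) termwise, so a fully rigorous argument either justifies that differentiation or, as the sentence preceding the statement already suggests, checks (\ref{La_n}) and (\ref{init a_n}) directly from the closed form (\ref{a_n}) of $a_n$.
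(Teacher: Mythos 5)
Your proposal is correct and follows essentially the same route as the paper: the particular-solution formula $f(x)\int_0^x f^{-2}(t)\int_0^t f(s)g(s)\,ds\,dt$ applied to (\ref{La_n}), followed by one integration by parts using $a_{n-1}(0)=0$. Your added remarks (the Polya factorization, the uniqueness argument identifying the $\mathbf{L}a_{n-1}$ contribution with $a_{n-1}$, and the caveat about verifying (\ref{La_n}), (\ref{init a_n}) directly from (\ref{a_n})) only make explicit what the paper leaves implicit.
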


\section{Numerical illustrations\label{Sect Numerical}}

\begin{example}
\label{Ex 1}Consider $q\equiv1$. This elementary example is sufficient to show
main features of the proposed representation for solutions of (\ref{Schr}).
Note that although the solutions can be written in a closed form the
coefficients $a_{n}$ are rather nontrivial since $f_{0}(x)=\cosh x$ and the
recurrent integrals in (\ref{phik}) or in (\ref{a_n recurrent}) can be
calculated explicitly only for few first coefficients. Thus, formula
(\ref{a_n recurrent}) was implemented numerically, and first we notice that it
allows one to compute, if necessary, dozens of thousands of the coefficients
$a_{n}$. We illustrate this observation by Fig. 1 and Fig. 2 on which the
difference $\max_{x\in\left[  0,1\right]  }\left\vert \frac{1}{2}\int_{0}%
^{x}q(y)dy-\sum_{n=0}^{N}a_{n}(x)\right\vert $ is presented for $N$ from zero
to ten thousands and from ten thousands one to one hundred thousands
respectively. Both graphs show that the sum $\sum_{n=0}^{N}a_{n}(x)$ with the
computed coefficients $a_{n}$ tends to $\frac{1}{2}\int_{0}^{x}q(y)dy$
according to (\ref{Sum a_n}) still for large $N$. We observe a quite slow
decay of the coefficients $a_{n}$. On Fig. 3 the magnitude $\max_{x\in\left[
0,1\right]  }\left\vert a_{n}(x)\right\vert $ is presented for $n=0,1,\ldots
100$ while on Fig. 4 the same magnitude is presented for $n=101,\ldots1000$. 

Fig. 5 shows the absolute error of the approximate solution for $N=100$ while
Fig. 6 shows the same magnitude but for $N=10^{4}$. Finally, we illustrate the
uniformity of the approximation with respect to $\omega\in\mathbb{R}$
prescribed by (\ref{estimate omega}). On Fig. 7 we show the absolute error of
the approximate solution for $N=100$ on the interval $\omega\in\left[
-1000,1000\right]  $. It can be observed that the maximum of the error is
achieved relatively near the origin, in this case when $\left\vert
\omega\right\vert \approx10$, and the error decays for large values of
$\omega$, for $\omega=\pm1000$ it is of order $10^{-6}$.

Fig.8 shows the absolute error of the approximate solution for $N=10^{4}$ on
the same interval $\omega\in\left[  -1000,1000\right]  $. Now the maximum of
the absolute error though two orders better than in the previous test is
achieved when $\left\vert \omega\right\vert \approx78$.In the next experiment
we chose $\omega=-i/4$, see Remark \ref{Rem complex omega}. Already with
$N=30$ the value of both the absolute and the relative errors of the solution
was of order $10^{-16}$. That confirms the estimate from Remark
\ref{Rem complex omega} and shows that the series (\ref{u}) converges
especially fast when $-1/2<\operatorname{Im}\omega<0$ and $\operatorname{Re}%
\omega=0$. This can be used for computing solutions for $\omega$ belonging to
other intervals by considering $\omega^{2}=\omega_{0}^{2}-\Lambda$ where
$-1/2<\operatorname{Im}\omega_{0}<0$ and $\operatorname{Re}\omega_{0}=0$, and
the equation is written in the form $-u^{\prime\prime}+\left(  q(x)+\Lambda
\right)  u=\omega_{0}^{2}u$. For example, we tested this simple idea
considering $\omega_{0}=-i/4$ and $\Lambda=100$, computing thus a solution for
Example \ref{Ex 1} with $\omega^{2}=-100.0625$. Again already with $N=30$ the
value of both the absolute and the relative errors of the solution was of
order $10^{-16}$.

\bigskip
\end{example}

\end{document}